\newcommand{\cS}{\mathcal{S}}
\newtheorem*{mth}{Main Theorem}
\newcommand{\GG}{\mathbb{G}}
\newcommand{\cC}{\mathcal{C}}
\newcommand{\cW}{\mathcal{W}}
\newcommand{\LL}{\mathbb{L}}
\newcommand{\cG}{\mathcal{G}}
\newcommand{\PG}{\mathrm{PG}}
\newcommand{\FF}{\mathbb{F}}
\newcommand{\Rad}{\mathrm{Rad}\,}
\newcommand{\codim}{\mathrm{codim}\,}
\theoremstyle{plain}
\newtheorem{lemma}{Lemma}[section]
\newtheorem{theorem}[lemma]{Theorem}
\newtheorem{proposition}[lemma]{Proposition}
\theoremstyle{definition}
\newtheorem{remark}[lemma]{Remark}
\def\nonsquare{\ensuremath{%
    \setbox0\hbox{$\square$}%
    \rlap{\hbox to \wd0{\hss\slash\hss}}\box0
}}
\begin{document}
\begin{frontmatter}
\title{Minimum distance of Symplectic Grassmann codes}
\author[IC]{Ilaria Cardinali\corref{cor1}} 
\ead{ilaria.cardinali@unisi.it}
\address[IC]{Department of Information Engineering, University of Siena,
Via Roma 56, I-53100, Siena, Italy}
\author[LG]{Luca Giuzzi}
\ead{luca.giuzzi@unibs.it}
\address[LG]{D.I.C.A.T.A.M., Section of Mathematics,
University of Brescia,
Via Branze 53, I-25123, Brescia, Italy}
\cortext[cor1]{Corresponding author}
\begin{abstract}
In this paper
we introduce  Symplectic Grassmann codes,
in analogy to ordinary Grassmann codes and
Orthogonal Grassmann codes, as projective codes
defined by symplectic Grassmannians.
Lagrangian--Grassmannian codes are a special class
of Symplectic Grassmann codes.
We describe all the parameters of line Symplectic Grassmann
codes and we provide the full
weight enumerator for the
Lagrangian--Grassmannian codes of rank $2$ and $3$.
\end{abstract}
\begin{keyword}
  Symplectic Grassmannian \sep Dual Polar Space \sep
  Error Correcting Code \sep Lagrangian Grassmannian Code 
\MSC[2010] 51A50 \sep 51E22 \sep 51A45
\end{keyword}
\end{frontmatter}

\section{Introduction}
Grassmann codes have been introduced in \cite{R1,R2} as generalizations
of Reed--Muller codes of the first order;
they have been extensively investigated ever since.
Their parameters, as well as some of their higher weights have been fully
determined in \cite{No}. These are projective codes, arising from the
Pl\"ucker embedding of a $k$--Grassmannian. A further point of interest
is that the weight distribution provides some interesting insight on
the geometry of the embedding itself.

Codes arising from
the Pl\"ucker embedding of the $k$--Grassmannian of an orthogonal polar
space have been introduced in a recent series of papers \cite{IL0,ILP,IL1}.
 In \cite{IL0}, we computed the
minimum distance for the codes arising
from orthogonal dual polar spaces
of rank $2$ and $3$ and provided a general bound on the minimum
distance. More recently, in \cite{ILP}, for $q$ odd
the minimum distance for all Line Polar Grassmann codes of
orthogonal type has been determined.
In \cite{IL1} an encoding scheme,
as well as strategies for decoding and error correction, has been
proposed for Line Polar Grassmann codes. We point out that, even if the
parameters of the codes under consideration arise from the Grassmann
embedding, neither the encoding scheme we considered nor the error
correction strategy we proposed
make direct use of Pl\"ucker coordinates.

The aim of the present paper is to provide results analogous to those
of \cite{IL0,ILP} for codes arising from the Pl\"ucker embedding
of $k$--Grassmannians of symplectic type.

More in detail,
we shall denote by
$\cW(n,k)$, the projective code defined by the image under the
Pl\"ucker embedding of the $k$--symplectic Grassmannian $\Lambda_{n,k}$
defined by a non--degenerate
alternating bilinear form $\sigma$ on a vector space $V:=V(2n,q)$
of dimension $2n$ over a finite field $\FF_q$.
This will be referred as a \emph{Symplectic Grassmann
Code}.

The paper is organized as follows:
in Section \ref{PRE} some basic notions about projective codes and
symplectic Grassmannians are recalled;
Section \ref{LLG} is dedicated to Line Symplectic
Grassmann codes and contains our main results for $k=2$;
Section \ref{DPS} is dedicated to the case of rank $k=3$. Overall, in
these sections we prove the following.
\begin{mth}
  The code $\cW(n,k)$ has parameters
  \[ N=\prod_{i=0}^{k-1}(q^{2n-2i}-1)/(q^{i+1}-1),\qquad
     K={2n\choose k}-{2n\choose{k-2}}. \]
Furthermore,
     \begin{itemize}
     \item For $k=2$, its minimum distance is $q^{4n-5}-q^{2n-3}$;
     \item For $n=k=3$, its minimum distance is $q^6-q^4$.
     \end{itemize}
\end{mth}
Finally,
in Section~\ref{bounds} we discuss some
further bounds  for the minimum distance in the general case of Symplectic
Grassmann codes arising from higher weights of
Grassmann codes.

We point out that the code $\cW(n,n)$
where  $k=n$, corresponding to the so called
\emph{dual polar space},
has already been introduced under the name
of Lagrangian-Grassmannian code of rank $n$ in \cite{PZ}, where some
bounds on the parameters have been obtained.

\section{Preliminaries}
\label{PRE}
A $[N,K,d_{\min}]$ projective system $\Omega\subseteq\PG(K-1,q)$
is just a set of $N$ distinct points in $\PG(K-1,q)$ whose
span is $\PG(K-1,q)$ and  such that for any
hyperplane $\Sigma$ of $\PG(K-1,q)$
\[ \#(\Omega\setminus\Sigma)\geq d_{\min}. \]
It is well known that existence of a $[N,K,d_{\min}]$ projective system is
equivalent to that of a
projective linear code $\cC$ with the same parameters.
Indeed, several codes can be obtained by taking as generator matrix
$G$ the matrix whose columns are the coordinates of the points of $\Omega$
normalized in some way. As the order of the points, the choice of
coordinates as well as the normalization adopted change, we obtain potentially
different codes arising from $\Omega$, but all of these turn out to be
equivalent. As such, in the following discussion, they will be silently
identified and we shall write $\cC=\cC(\Omega)$.
The spectrum of the intersections of $\Omega$ with the
hyperplanes of $\PG(K-1,q)$ provides the list of the weights of $\cC$;
we refer to \cite{TVZ} for further details.

Let now and throughout the paper
$V:=V(2n,q)$ be a $2n$-dimensional vector space equipped
with a non--degenerate bilinear alternating form $\sigma$.
Denote by $\cG_{2n,k}$ the $k$-Grassmannian of the projective
space $\PG(V)$, that is the point--line
geometry whose points are the $k$-dimensional subspaces of $V$
and whose lines are the sets
\[ \ell_{W,T}:=\{ X: W\leq X\leq T, \dim X=k \} \]
with $\dim W=k-1$ and  $\dim T=k+1$.
A projective embedding of $\cG_{2n,k}$ is a function
$e:\cG_{2n,k}\to\PG(U)$ such that $\langle e(\cG_{2n,k})\rangle=\PG(U)$
and each line of $\cG_{2n,k}$ is mapped onto a line of $\PG(U)$.
The dimension of $U$ is called \emph{dimension of the embedding}.
It is well known that
the geometry $\cG_{2n,k}$ affords a projective
embedding $e_{k}^{gr}:\cG_{2n,k}\to\PG(\bigwedge^kV)$ by means
of Pl\"ucker coordinates. In particular, $e_k^{gr}$ maps an arbitrary
$k$--dimensional subspace $\langle v_1,v_2,\ldots,v_k\rangle$ of
$V$ to the point $\langle v_1\wedge v_2\wedge\cdots\wedge v_k\rangle$.
The image $e_{k}^{gr}(\cG_{2n,k})$ is a projective variety
of $\PG(\bigwedge^kV)$,
usually denoted by the symbol $\GG(2n-1,k-1)$,
see \cite[Lecture 6]{Ha}, called the \emph{Grassmann variety}.

The symplectic Grassmannian $\Lambda_{n,k}$ induced by $\sigma$,
is defined for $k=1,\ldots,n$ as the subgeometry of $\cG_{2n,k}$ having as
points the totally $\sigma$--isotropic
subspaces of $V$ of dimension $k$ and as lines
\begin{itemize}
\item for $k< n$, the sets of the form
   \[ \ell_{W,T}:=\{ X: W\leq X\leq T, \dim X=k \} \]
   with $T$ totally isotropic and $\dim W=k-1$, $\dim T=k+1$.
 \item for $k=n$, the sets of the form
   \[ \ell_{W}:=\{ X: W\leq X, \dim X=n\} \]
   with $\dim W=n-1$, $W$ totally isotropic.
\end{itemize}
For $k=n$, $\Lambda_{n,n}$ is usually called \emph{dual polar space of rank $n$}
or \emph{Lagrangian Grassmannian}.

The image of $\Lambda_{n,k}$ under the Pl\"ucker embedding
$e_k^{gr}$ is a  subvariety
$\LL(n-1,k-1)$ of the Grassmann variety
$\GG(2n-1,k-1)$.%

Let $\Sigma=\langle\LL(n-1,k-1)\rangle<\PG(\bigwedge^k V)$.
It is well known, see \cite{Ba,PS}, that
\[\dim \Sigma={2n\choose k}-{2n\choose{k-2}};\]
indeed, the variety $\LL(n-1,k-1)$ is the full intersection of
$\GG(2n-1,k-1)$ with a
suitable subspace of $\bigwedge^kV$ of codimension ${2n\choose{k-2}}$.

The following formula provides
the length of $\cW(n,k)$:
\begin{equation}
\label{e0a}
\#\LL(n-1,k-1)=\#\Lambda_{n,k}=\prod_{i=0}^{k-1}(q^{2n-2i}-1)/(q^{i+1}-1).
\end{equation}

As pointed out before, the pointset of
$\LL(n-1,k-1)$ is a projective system of $\PG(\Sigma)$; thus this
determines an associated projective code which we
shall denote by $\cW(n,k)$
and call it a \emph{Symplectic Grassmann code}.
When $n=k$  Symplectic Grassmann codes are known in the literature also
as \emph{Lagrangian-Grassmannian codes}.
A straightforward consequence of the remarks
presented above is the following lemma.
\begin{lemma}
  The code $\cW(n,k)$ has length $N=\#\LL(n-1,k-1)$ and
  dimension $K=\dim\Sigma$.
\end{lemma}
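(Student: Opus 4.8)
The plan is to prove this lemma by simply assembling the facts already established in the preliminaries. The statement to be shown is that $\cW(n,k)$ has length $N=\#\LL(n-1,k-1)$ and dimension $K=\dim\Sigma$, where $\Sigma=\langle\LL(n-1,k-1)\rangle$. Since this is a bookkeeping statement collecting definitions rather than a deep result, the proof is essentially immediate, and the only ``obstacle'' is making sure every ingredient has been cited or displayed earlier — which it has.

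First I would recall that, as set up in Section~\ref{PRE}, a projective system $\Omega\subseteq\PG(K-1,q)$ is a spanning set of $N$ distinct points, and the associated projective code $\cC(\Omega)$ has length equal to $\#\Omega$ and dimension equal to $\dim\langle\Omega\rangle$. Next I would observe that the pointset of $\LL(n-1,k-1)$, being the image of the symplectic Grassmannian $\Lambda_{n,k}$ under the Pl\"ucker embedding $e_k^{gr}$, is by construction a set of distinct points of $\PG(\bigwedge^k V)$ spanning the subspace $\Sigma$; hence it is a projective system of $\PG(\Sigma)$, and $\cW(n,k)=\cC(\LL(n-1,k-1))$ by definition.

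It then follows directly that the length of $\cW(n,k)$ equals the number of points of $\LL(n-1,k-1)$, which by \eqref{e0a} is
\[
N=\#\LL(n-1,k-1)=\#\Lambda_{n,k}=\prod_{i=0}^{k-1}(q^{2n-2i}-1)/(q^{i+1}-1),
\]
and that the dimension of $\cW(n,k)$ equals $\dim\Sigma$, which by the cited results of \cite{Ba,PS} equals ${2n\choose k}-{2n\choose{k-2}}$. The one point requiring a word of care is that the Pl\"ucker map is injective on point-rows of a Grassmannian, so that distinct totally isotropic $k$-subspaces yield distinct projective points and no collapse of the length occurs; this is part of $e_k^{gr}$ being an embedding, already recalled above. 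With these observations the lemma is proved, and in fact the numerical values displayed in the Main Theorem's parameter formulas are exactly $N$ and $K$ as just computed.
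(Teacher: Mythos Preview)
Your proposal is correct and matches the paper's approach exactly: the paper does not supply a separate proof at all, simply introducing the lemma as ``a straightforward consequence of the remarks presented above,'' and your write-up just makes those remarks explicit. The only addition you make beyond the paper is the brief remark on injectivity of $e_k^{gr}$, which is harmless and indeed already implicit in the assertion that $e_k^{gr}$ is a projective embedding.
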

\section{Line Symplectic Grassmann Codes}
\label{LLG}
Throughout this section $\cS:=W(2n-1,q)$ denotes a non-degenerate
symplectic polar space
defined by the non--degenerate alternating bilinear form
$\sigma$ on $V$ of rank $n$.
By $\theta$ we shall denote a different (possibly degenerate)
alternating bilinear form.
We
shall also write $\perp_{\sigma}$ and $\perp_{\theta}$ for the orthogonality
relations induced by  $\sigma$ and $\theta$ respectively.
Finally, recall that the \emph{radical} of $\theta$ is the set
\[ \Rad\theta:=\{ x\in\cS : x^{\perp_{\theta}}=\PG(V) \}=
\{ x\in\cS: \forall y\in\cS, \theta(x,y)=0 \}. \]

For $k=2$ the expression \eqref{e0a} becomes
\[ N:=\#\Lambda_{n,2}=\frac{(q^{2n}-1)(q^{2n-2}-1)}{(q-1)(q^2-1)}; \qquad K:=
2n^2-n-1. \]

It is well known that any bilinear alternating form $\theta$
determines an hyperplane of $\bigwedge^2 V$ and conversely;
hence, the minimum distance of $\cW(n,2)$ can be deduced from
the maximum number of lines which are
simultaneously totally isotropic for both
the forms $\theta$ and $\sigma$, under the assumption $\theta\neq\sigma.$
In order to determine this number we follow an approach similar to
that of \cite[Lemma 3.2]{ILP}.
\begin{lemma}
Suppose $M$ is the matrix representing $\sigma$ and $S$
the matrix representing $\theta$ with respect to a given reference system.
For any $p\in \cS$ we have $p^{\perp_{\sigma}}\subseteq p^{\perp_{\theta}}$
if, and only if, $p$ is an eigenvector of the matrix $M^{-1}S$.
\end{lemma}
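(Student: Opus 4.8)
The plan is to translate the geometric containment $p^{\perp_\sigma}\subseteq p^{\perp_\theta}$ into a linear-algebraic statement about the matrices $M$ and $S$, exploiting that both forms are represented by antisymmetric matrices with $M$ invertible. First I would fix coordinates so that $p=\langle v\rangle$ corresponds to a nonzero column vector $v\in V$. The hyperplane $p^{\perp_\sigma}$ is the set of $x$ with $x^tMv=0$, i.e.\ the kernel of the linear functional $x\mapsto x^tMv$; likewise $p^{\perp_\theta}$ is the kernel of $x\mapsto x^tSv$ (when $Sv\neq 0$; the degenerate case $Sv=0$ must be handled separately, see below). Two linear functionals on $V$ have the first kernel contained in the second precisely when the second is a scalar multiple of the first, so $p^{\perp_\sigma}\subseteq p^{\perp_\theta}$ is equivalent to the existence of a scalar $\lambda\in\FF_q$ with $Sv=\lambda Mv$, that is $M^{-1}Sv=\lambda v$. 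This is exactly the statement that $p$ is an eigenvector of $M^{-1}S$.

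The main steps, in order: (1) record that an alternating form, being represented by an antisymmetric matrix, satisfies $v^tMv=0$ for all $v$, so in particular $v\in p^{\perp_\sigma}$ and $v\in p^{\perp_\theta}$ automatically — this guarantees that the relevant hyperplanes genuinely pass through $p$ and that the functional description is the right one. (2) Prove the "functional kernel containment $\Leftrightarrow$ proportionality" fact: if $f,g$ are linear functionals and $f\neq 0$, then $\ker f\subseteq\ker g$ iff $g=\lambda f$ for some $\lambda$; this is elementary, choosing $w$ with $f(w)\neq0$ and setting $\lambda=g(w)/f(w)$, then checking $g-\lambda f$ vanishes on $\langle w\rangle$ and on $\ker f$, hence everywhere. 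Since $\sigma$ is non-degenerate and $v\neq 0$, the functional $x\mapsto x^tMv$ is nonzero, so this applies. (3) Deal with the degenerate case $Sv=0$: then $p^{\perp_\theta}=\PG(V)$, so the containment holds trivially, and at the same time $M^{-1}Sv=0=0\cdot v$, so $p$ is an eigenvector with eigenvalue $0$; thus both sides of the claimed equivalence hold, consistently. (4) Assemble: $p^{\perp_\sigma}\subseteq p^{\perp_\theta}\iff \exists\lambda:\ Sv=\lambda Mv\iff \exists\lambda:\ M^{-1}Sv=\lambda v\iff p$ is an eigenvector of $M^{-1}S$.

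I expect no serious obstacle here — the statement is essentially a repackaging of the correspondence between hyperplanes of $\bigwedge^2 V$ and alternating forms, localized at a single point. The only point requiring a little care is the bookkeeping around degeneracy of $\theta$: one must check that the claimed equivalence still reads correctly when $Sv=0$ (so that $p^{\perp_\theta}$ is the whole space), and, symmetrically, that $v$ itself always lies in $p^{\perp_\sigma}$ and $p^{\perp_\theta}$ so that talking about the containment of these ``hyperplanes through $p$'' is meaningful; both follow from antisymmetry of $M$ and $S$. A secondary, purely cosmetic issue is the identification of projective points of $\cS$ with one-dimensional subspaces of $V$ and the passage between the form $\theta$ and its Gram matrix $S$ relative to the fixed reference system, which is already implicit in the hypothesis of the lemma.
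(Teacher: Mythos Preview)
Your proposal is correct and follows essentially the same route as the paper: both arguments express $p^{\perp_\sigma}$ and $p^{\perp_\theta}$ as kernels of the linear functionals $x\mapsto x^TMv$ and $x\mapsto x^TSv$, treat the degenerate case $Sv=0$ separately, and in the non-degenerate case use that kernel containment forces proportionality of the functionals, yielding $Sv=\lambda Mv$. Your write-up is slightly more explicit about why proportionality follows and about the role of antisymmetry, but the underlying argument is the same.
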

\begin{proof}
  Since $M$ is non--singular, if $M^{-1}Sp=\mathbf{0}$, then $Sp=0$, that
  is to say $p\in\Rad\theta$, i.e. $p^{\perp_{\theta}}=\PG(V)$.
  In this case, obviously,
  $p^{\perp_{\sigma}}\subseteq p^{\perp_{\theta}}$.
  \par
  When $p\not\in\Rad\theta$, we have $\dim p^{\perp_{\sigma}}=\dim p^{\perp_{\theta}}$.
  Thus, $p^{\perp_{\sigma}}\subseteq p^{\perp_{\theta}}$ if, and only if,
  $p^{\perp_{\sigma}}=p^{\perp_{\theta}}$, that is to say the systems of equations
  $x^TMp=0$ and $x^TSp=0$ are equivalent. This yields
  $Sp=\lambda Mp$ for some $\lambda\neq 0$, whence $p$ is an eigenvector
  of eigenvalue $\lambda$ for $M^{-1}S$.
\end{proof}
Write now
\[ N_0:=\#\{p\in \cS: p^{\perp_{\sigma}}\not\subseteq p^{\perp_{\theta}}
\},\qquad N_1:=\#\{p\in \cS: p^{\perp_{\sigma}}\subseteq p^{\perp_{\theta}}\}.
 \]
Clearly, $N_0=\frac{q^{2n}-1}{q-1}-N_1.$

For any $p\in \cS$, a line $\ell$ through $p$ is both totally
$\sigma$--isotropic and $\theta$--isotropic if, and only if,
$\ell\in p^{\perp_{\sigma}}\cap p^{\perp_{\theta}}$.
In particular,
\begin{itemize}
\item  if $p^{\perp_\sigma}\subseteq p^{\perp_{\theta}}$, then
  $\frac{q^{2n-2}-1}{q-1}$ lines through $p$ are both $\sigma$-- and $\theta$--isotropic;
\item if $p^{\perp_{\sigma}}\not\subseteq p^{\perp_{\theta}}$, then
  $p^{\perp_{\sigma}}\cap p^{\perp_{\theta}}$ is a subspace of
  codimension $2$ in $\PG(V)$ and
  the number
  of lines which are both $\sigma$-- and $\theta$--isotropic is
  $\frac{q^{2n-3}-1}{q-1}$.
\end{itemize}
Denote now by $\eta$ the number of lines of $\cS$ which are
simultaneously totally $\sigma$-- and
$\theta$--isotropic. As each line contains $(q+1)$ points, we have
\begin{equation}
\label{e1}
 (q+1)\eta = N_0\frac{q^{2n-3}-1}{q-1}+N_1\frac{q^{2n-2}-1}{q-1}=
q^{2n-3}N_1+\frac{(q^{2n}-1)(q^{2n-3}-1)}{(q-1)^2}.
\end{equation}
Clearly, $\eta$ is maximum when $N_1$ is maximum.
In the remainder of this section we shall determine exactly how large
$N_1$ can be.
\begin{lemma}\label{eigenvectors}
  If the matrix $M^{-1}S$ has
  has just two eigenspaces, one of dimension $2n-2$, the other of
  dimension $2$, then the number of eigenvectors of $M^{-1}S$ is
  maximum.
\end{lemma}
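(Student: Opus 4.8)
The statement is a combinatorial optimization: among all matrices of the form $M^{-1}S$ (with $S$ alternating, $S \ne M$) one must maximize the number of eigenvectors, i.e.\ the number of points $p \in \cS$ lying on some eigenspace. The plan is as follows. First I would reduce to an abstract counting problem: if $A := M^{-1}S$ has eigenspaces $E_1, \dots, E_r$ of dimensions $d_1 \ge d_2 \ge \dots \ge d_r$ with $\sum d_i \le 2n$ (equality iff $A$ is diagonalizable; in general the eigenspaces are the geometric ones and may fail to span $V$), then the number of projective eigenvectors is $\sum_i \frac{q^{d_i}-1}{q-1}$, provided the $E_i$ are pairwise disjoint subspaces — which they are, being eigenspaces for distinct eigenvalues. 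So the quantity to maximize is $f(d_1, \dots, d_r) = \sum_{i=1}^r \frac{q^{d_i}-1}{q-1}$ subject to $d_i \ge 1$, $\sum d_i \le 2n$, and $r \ge 2$ (the constraint $r \ge 2$ coming from $\theta \ne \sigma$: if $A$ had a single eigenspace of full dimension $2n$ it would be scalar, forcing $S = \lambda M$, and the alternating/non-degenerate normalization rules out $\lambda \ne 1$).

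Second, I would prove that this discrete function is maximized by concentrating the dimension: since $q^{a+b} - 1 > (q^a - 1) + (q^b - 1)$ for $a, b \ge 1$ and $q \ge 2$, merging two eigenspaces always strictly increases $f$. Hence at the optimum we want $r$ as small as possible and one block as large as possible; with the hard constraint $r \ge 2$ this forces exactly two eigenspaces, of dimensions $2n-1$ and $1$, or $2n-2$ and $2$, etc. — whichever of these is actually realizable by a genuine pair $(\sigma, \theta)$. The key geometric input is that not every dimension split is attainable: an eigenspace $E_\lambda$ of $A = M^{-1}S$ with $\lambda \ne 0$ has the property that $\sigma$ restricted to it is, up to scalar, equal to $\theta$ restricted to it, and a short computation with the alternating forms shows that $\sigma(E_\lambda, E_\mu) = 0$ whenever $\lambda\mu \ne 1$; consequently eigenspaces pair up under $\sigma$, and an eigenspace not paired with another must be $\sigma$-nondegenerate, hence even-dimensional. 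This is exactly why a $(2n-1,1)$ split is impossible and the extremal configuration is $(2n-2,2)$, matching the statement.

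Third, with the admissible splits pinned down, I would simply evaluate: $f(2n-2,2) = \frac{q^{2n-2}-1}{q-1} + \frac{q^2-1}{q-1} = \frac{q^{2n-2}-1}{q-1} + q + 1$, and compare it against every other admissible pair $(d_1, d_2)$ with $d_1 + d_2 \le 2n$ and $d_2$ even (for the unpaired block) or both blocks paired; in all cases the dominant term $q^{d_1}$ is largest exactly when $d_1 = 2n-2$, and the remaining check is a one-line inequality. I would also handle the degenerate-$\theta$ case (where $0$ is an eigenvalue, $S$ singular, $\Rad\theta = \ker A$) in the same framework, since $\ker A$ is itself one of the eigenspaces $E_0$ and the pairing argument still applies ($E_0$ is paired with the sum of generalized eigenspaces for $\lambda \ne 0$, or is itself $\sigma$-nondegenerate).

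The main obstacle is the geometric realizability step: bounding $f$ over all dimension vectors is elementary, but showing that the unpaired (odd-dimensional) splits like $(2n-1,1)$ genuinely cannot occur — rather than just being dominated — requires the observation that eigenspaces of $M^{-1}S$ interact with $\sigma$ in the pairwise-orthogonality pattern described above, forcing a parity constraint. Making that argument clean, and checking it survives the passage to geometric (non-diagonalizable) eigenspaces and to singular $S$, is where the real work lies; the rest is bookkeeping with $q$-analogues.
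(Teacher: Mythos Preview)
Your overall strategy---reduce to optimizing $\sum_i (q^{d_i}-1)/(q-1)$ over eigenspace dimensions, use a merging inequality to force two blocks, then invoke a parity constraint to exclude the $(2n-1,1)$ split---is the same as the paper's. The paper phrases the concentration step as a direct comparison of three eigenspaces against two rather than your general $q^{a+b}-1 > (q^a-1)+(q^b-1)$, but that is cosmetic.

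Where you diverge is in the parity argument. The paper does not use orthogonality of eigenspaces at all; instead it replaces $\theta$ by $\theta' := \theta - \lambda\sigma$ (i.e.\ $S$ by $S-\lambda M$), notes that this leaves the set of simultaneously isotropic lines unchanged, and that $M^{-1}S' = M^{-1}S - \lambda I$ now has the former $\lambda$-eigenspace as its kernel. Since $S'$ is antisymmetric its rank is even, so that kernel---hence the original eigenspace---has even dimension. This shift trick is shorter than your proposed route.

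Your orthogonality route is also viable, but the condition you state is wrong. With both $M$ and $S$ antisymmetric, $A = M^{-1}S$ is $\sigma$-\emph{self-adjoint}: $\sigma(Ax,y)=x^TSy=\sigma(x,Ay)$. Hence $\sigma(E_\lambda,E_\mu)=0$ whenever $\lambda\ne\mu$, not whenever $\lambda\mu\ne 1$; there is no pairing $E_\lambda\leftrightarrow E_{1/\lambda}$. Rather, every eigenspace is $\sigma$-orthogonal to all the others, so in the diagonalizable case each one is $\sigma$-nondegenerate and therefore even-dimensional. This correction actually simplifies your plan (the paired/unpaired casework disappears) and delivers the same conclusion. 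The $\lambda\mu\ne 1$ condition would arise if one of the two forms were symmetric, which is not the situation here.
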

\begin{proof}
  In order for the number of eigenvectors of $M^{-1}S$ to be maximum we
  need $M^{-1}S$ to be diagonalizable. Suppose that there are at least
  $3$ distinct eigenspaces, say $V_{\alpha}$, $V_{\beta}$ and $V_{\gamma}$
  of dimensions respectively $a,b,c$ with $a\leq b\leq c$.
  Then, the number of eigenvectors is
  \[ \# V_{\alpha}+\# V_{\beta}+\# V_{\gamma}-3= q^{a}+q^b+q^{2n-a-b}-3. \]
  Clearly $a,b>1$ and $a+b<2n$. In particular, this number is maximum
  for $a=b=1$, in which case we get
  \[ \# V_{\alpha}+\# V_{\beta}+\# V_{\gamma}-3=2q+q^{2n-2}-3<q^2+q^{2n-2}-2.\]
  Thus, the maximum number of eigenvectors which can be obtained with just
  $2$ eigenspaces, say $V_{\alpha}$ and $V_{\beta}$ is larger than that
  possible with $3$ distinct eigenspaces.

  Observe that since $S$ is antisymmetric, its rank is necessarily even;
  in particular, the rank of $M^{-1}S$ is also even.
  Suppose now $\lambda\neq0$ to be an eigenvalue of $M^{-1}S$
  and suppose than the corresponding eigenspace is maximum and
  has dimension $g$.
  The number of simultaneously $\sigma$-- and $\theta$--isotropic
  lines $\ell=\langle v_1,v_2\rangle$
  is the same as the number of lines which are simultaneously
  $\sigma$-- and $(\theta-\lambda\sigma)$--isotropic,
  as $\sigma(v_1,v_2)=0$ and $\theta(v_1,v_2)=0$ yields
  $(\theta-\lambda\sigma)(v_1,v_2)=\theta(v_1,v_2)-\lambda\sigma(v_1,v_2)=0$.
  The latter alternating form,
  say $\theta'=\theta-\lambda\sigma$ is represented the matrix
  $S'=S-\lambda M$.
  In particular, we can replace $S$ with $S'$ and we get
  \[ M^{-1}S'=M^{-1}(S-\lambda M)=M^{-1}S-\lambda I. \]
  For this new matrix, $0$ is an eigenvalue with eigenspace of dimension $g$.
  Thus, $g$ must be even.

  We conclude that the maximum number of eigenvectors might occur
  when $g=2n-2$ and
  there is a further eigenspace of dimension $2$, that is
  \[ \# V_{\lambda}+\# V_{\mu}-2=q^{2n-2}+q^{2}-2. \]
\end{proof}

The case of the previous lemma corresponds to
\[ N_1=\frac{q^{2n-2}-1}{q-1}+\frac{q^2-1}{q-1}. \]
Plugging in this value in \eqref{e1} we obtain
\[ \eta=\frac{q^{4n-3}+q^{4n-4}-q^{4n-5}-q^{2n-1}-2q^{2n-2}+q^{2n-3}+1}{(q-1)(q^2-1)}. \]
whence we get the following lemma.
\begin{lemma}\label{lemma-min-dist}
\[ d_{\min}(\cW(n,2))\geq q^{4n-5}-q^{2n-3}. \]
\end{lemma}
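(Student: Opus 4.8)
The plan is to exploit the standard correspondence between the weights of a projective code and the hyperplane sections of its associated projective system. Set $\Sigma:=\langle\LL(n-1,1)\rangle$, so that $\cW(n,2)=\cC(\LL(n-1,1))$ with $\LL(n-1,1)$ a projective system of $\PG(\Sigma)$ of size $N$; then
\[ d_{\min}(\cW(n,2))=N-\max_{H}\#\bigl(\LL(n-1,1)\cap H\bigr), \]
the maximum being over all hyperplanes $H$ of $\PG(\Sigma)$. Each such $H$ is the trace on $\Sigma$ of a hyperplane $H_\theta$ of $\PG(\bigwedge^2V)$ attached to an alternating form $\theta$, and $e_2^{gr}(\langle v_1,v_2\rangle)\in H_\theta$ if and only if $\theta(v_1,v_2)=0$, i.e.\ if and only if the line $\langle v_1,v_2\rangle$ is totally $\theta$--isotropic. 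Moreover $H_\theta$ contains $\Sigma$ (and hence induces no hyperplane of $\PG(\Sigma)$) precisely when $\theta$ is proportional to $\sigma$: the space of alternating forms vanishing on every $\sigma$--isotropic line is the annihilator of $\Sigma$, of dimension ${2n\choose 2}-\dim\Sigma=1$, and it contains $\sigma$. Consequently $\#(\LL(n-1,1)\cap H)=\eta(\theta)$ for some $\theta$ not proportional to $\sigma$, and everything reduces to bounding $\eta(\theta)$ from above over all such $\theta$.

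First I would observe that, by \eqref{e1}, $\eta(\theta)$ depends on $\theta$ only through $N_1(\theta)$ and is a strictly increasing affine function of it; hence it suffices to bound $N_1(\theta)$. By the lemma characterising the points $p$ with $p^{\perp_\sigma}\subseteq p^{\perp_\theta}$, and since every point of $\PG(V)$ belongs to $\cS$, the quantity $N_1(\theta)$ equals the number of points of $\PG(V)$ lying on some eigenspace of $M^{-1}S$, that is $\sum_i (q^{d_i}-1)/(q-1)$ where the $d_i$ are the dimensions of the eigenspaces. By Lemma~\ref{eigenvectors} together with the arguments in its proof --- $M^{-1}S$ must be diagonalisable for the count to be extremal; after subtracting a scalar matrix the complement of a maximal eigenspace is the image of an antisymmetric matrix, so that eigenspace has even dimension; and the one--eigenspace case forces $M^{-1}S$ scalar, i.e.\ $\theta$ proportional to $\sigma$, which is excluded --- one obtains, for every admissible $\theta$,
\[ N_1(\theta)\le\frac{q^{2n-2}-1}{q-1}+\frac{q^2-1}{q-1}, \]
with equality possible only when $M^{-1}S$ has exactly two eigenspaces, of dimensions $2n-2$ and $2$.

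Substituting this bound into \eqref{e1} gives $\eta(\theta)\le\eta$, where $\eta$ is the value displayed immediately before the lemma, and therefore
\[ \#\bigl(\LL(n-1,1)\setminus H\bigr)=N-\#\bigl(\LL(n-1,1)\cap H\bigr)\ge N-\eta \]
for every hyperplane $H$ of $\PG(\Sigma)$; this already yields $d_{\min}(\cW(n,2))\ge N-\eta$. It then remains only to simplify $N-\eta$: after clearing the common denominator $(q-1)(q^2-1)$ and using the identity $q^3-q^2-q+1=(q-1)(q^2-1)$ to factor the numerator, the expression collapses to $q^{4n-5}-q^{2n-3}$, which is the asserted bound. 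The step that needs genuine care is the penultimate one --- establishing the inequality on $N_1(\theta)$ for \emph{all} $\theta$ not proportional to $\sigma$, including the cases where $M^{-1}S$ is not diagonalisable or has eigenvalues outside $\FF_q$ (both of which can only decrease $N_1$), rather than merely in the extremal configuration --- and this is exactly what Lemma~\ref{eigenvectors} and the discussion around it supply; the remaining manipulations are routine.
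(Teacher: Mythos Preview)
Your proposal is correct and follows essentially the same route as the paper: bound $N_1$ via Lemma~\ref{eigenvectors}, substitute into \eqref{e1} to get the maximal $\eta$, and compute $N-\eta=q^{4n-5}-q^{2n-3}$. You have in fact supplied more detail than the paper does here---in particular the identification of hyperplanes of $\PG(\Sigma)$ with alternating forms $\theta$ \emph{not proportional} to $\sigma$, and the explicit factoring of $N-\eta$---whereas the paper simply states the value of $\eta$ and asserts the lemma.
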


We are now ready to prove our main theorem for Line Symplectic Grassmann
codes.
\begin{theorem}
\label{tt}
  The minimum distance of the code $\cW(n,2)$ is
  $d_{\min}(\cW(n,2))=q^{4n-5}-q^{2n-3}$.
\end{theorem}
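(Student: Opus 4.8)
The plan is to complement the lower bound of Lemma~\ref{lemma-min-dist} with a matching construction. Recall from the discussion preceding Lemma~\ref{lemma-min-dist} that an alternating form $\theta$ which is not a scalar multiple of $\sigma$ determines a codeword of $\cW(n,2)$ of weight $N-\eta$, where $\eta$ is the number of lines of $\cS$ that are simultaneously totally $\sigma$-- and $\theta$--isotropic, and that $\eta$ is an increasing function of $N_1$ through~\eqref{e1}. Hence it suffices to exhibit one such $\theta$ for which $N_1$ attains the extremal value
\[ N_1=\frac{q^{2n-2}-1}{q-1}+\frac{q^2-1}{q-1} \]
of Lemma~\ref{eigenvectors}: plugging it into~\eqref{e1} returns precisely the value of $\eta$ with $N-\eta=q^{4n-5}-q^{2n-3}$, as was already computed, so such a $\theta$ yields a codeword of weight $q^{4n-5}-q^{2n-3}$ and, together with Lemma~\ref{lemma-min-dist}, the theorem follows.

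To build $\theta$ I would fix a symplectic basis $e_1,\dots,e_n,f_1,\dots,f_n$ of $V$ relative to $\sigma$ and let $\theta$ be the alternating form with $\theta(e_1,f_1)=1$ and $\theta$ vanishing on every other unordered pair of basis vectors. Then $\rank\theta=2<2n=\rank\sigma$, so $\theta$ is not proportional to $\sigma$ and indeed defines a genuine hyperplane of $\PG(\Sigma)$. Writing $M$ and $S$ for the matrices representing $\sigma$ and $\theta$ in this basis and splitting $V=\langle e_1,f_1\rangle\oplus\langle e_2,f_2,\dots,e_n,f_n\rangle$, a direct block computation gives $M^{-1}S=I_2\oplus 0_{2n-2}$; in particular $M^{-1}S$ is diagonalizable over $\FF_q$ with eigenspaces $\langle e_1,f_1\rangle$ (eigenvalue $1$, dimension $2$) and $\langle e_2,\dots,f_n\rangle$ (eigenvalue $0$, dimension $2n-2$). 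By the first Lemma of Section~\ref{LLG}, $N_1$ equals the number of points of $\cS$ lying on one of these two eigenspaces, namely $\frac{q^2-1}{q-1}+\frac{q^{2n-2}-1}{q-1}$, which is exactly the required value.

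The only step that is not bookkeeping is this realisation argument: Lemma~\ref{eigenvectors} shows that an eigenspace pattern of dimensions $2n-2$ and $2$ \emph{would} maximise the eigenvector count, but leaves open whether it is actually attained by an antisymmetric matrix over $\FF_q$ (antisymmetry could a priori obstruct it). The rank--$2$ form above settles this, and the explicit block shape of $M^{-1}S$ makes it transparent that there is no nilpotent part, so the matrix is genuinely diagonalizable and the count of eigenvector points is immediate. Everything else — the identity $N-\eta=q^{4n-5}-q^{2n-3}$ for this value of $N_1$, and the identification of the weight of a codeword with $N-\eta$ — is exactly the computation already carried out in the excerpt between Lemma~\ref{eigenvectors} and Lemma~\ref{lemma-min-dist}.
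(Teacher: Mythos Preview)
Your proposal is correct and follows essentially the same route as the paper: both complement Lemma~\ref{lemma-min-dist} by exhibiting a rank--$2$ alternating form $\theta$ supported on a single hyperbolic pair, compute $M^{-1}S=\operatorname{diag}(I_2,O_{2n-2})$, and conclude via Lemma~\ref{eigenvectors}. The only cosmetic difference is that the paper starts from an arbitrary non--isotropic line $\ell$ and builds a basis around it, whereas you start from a symplectic basis and pick $\ell=\langle e_1,f_1\rangle$; these are the same construction viewed from opposite ends.
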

\begin{proof}
We shall show that, given a non-degenerate alternating form $\sigma$
represented by a matrix $M$, it is always possible to define an
alternating form $\theta$ represented by a matrix $S$ such that
$M^{-1}S$ has only two eigenspaces, one of dimension $2n-2$ and the
other of dimension $2$.

In order to prove this, let $\ell=\langle v_1,v_2\rangle$
be a line of $\PG(2n-1,q)$ which
is not $\sigma$--isotropic and define an alternating form $\theta$ such that
$\theta(v_1,v_2)=\sigma(v_1,v_2)$ and
$\Rad\theta=\ell^{\perp_{\sigma}}$.

Take $B={B}_1\cup B_2$ to be an ordered basis of $V$ where $B_1=(v_1,v_2)$ and ${B}_2$ is an ordered basis of $\ell^{\perp_{\sigma}}$.

Let $M$ be the matrix representing $\sigma$ with respect to $B$. We can suppose $M=\operatorname{diag}(M_{11}, M_{22})$ to be a block diagonal matrix where
\[M_{11}=\left(\begin{array}{rr}
0&1\\
-1&0\\
\end{array}\right)\,\,{\rm and }\,\, M_{22}=\left(\begin{array}{rr}
{O}_n& {I}_n\\
-{I}_n&{O}_n\\
\end{array}\right)\]
with ${O}_n$ the null $(n\times  n)$--matrix and ${I}_n$
the $(n\times  n)$--identity matrix.
By construction, the matrix $S$ representing $\theta$ with respect to
${B}$ is also block diagonal
$S=\operatorname{diag}(S_{11}, {O}_{2n-2})$ with
$S_{11}=\left(\begin{array}{rr}
0&1\\
-1&0\\
\end{array}\right)$ and ${O}_{2n-2}$ the null
$(2n-2)\times (2n-2)$--matrix.

Hence $M^{-1}S$ is the block diagonal matrix
$M^{-1}S=\operatorname{diag}({I}_2, {O}_{2n-2})$.
Clearly, $M^{-1}S$ has only two eigenspaces, one of dimension $2n-2$ and
 the other of dimension $2.$
The thesis now
follows from Lemma~\ref{eigenvectors} and Lemma~\ref{lemma-min-dist}.

\end{proof}

The proof of Theorem \ref{tt} holds also for the code $\cW(2,2)$ arising
from the dual polar space $\Lambda_{2,2}$. However, in this case we can
easily provide the full weight enumerator.

\begin{proposition}
  The code $\cW(2,2)$ has exactly $3$ nonzero weights, namely $q^3+q$,
  $q^3-q$ and $q^3$ and the following weight enumerator
  \[\begin{array}{l|l}
    \text{Weight} &\multicolumn{1}{c}{ \#\text{ Codewords} } \\ \hline
    q^3-q & q^2(q^2+1)(q-1)/2 \\
    q^3 & q^4-1 \\
    q^3+q & q^2(q^2-1)(q-1)/2
    \end{array} \]
\end{proposition}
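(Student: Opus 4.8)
The plan is to identify the projective system underlying $\cW(2,2)$ with the point set of a parabolic quadric of $\PG(4,q)$ and then invoke the classical theory of hyperplane sections of quadrics. Since $\dim V=4$, the exterior square $\bigwedge^2V$ has dimension $6$ and the Grassmann variety $\GG(3,1)=e_2^{gr}(\cG_{4,2})$ is the Klein quadric $Q^+(5,q)\subset\PG(5,q)$. As recalled in Section~\ref{PRE}, $\LL(1,1)$ is the intersection of $Q^+(5,q)$ with a subspace of codimension $\binom{4}{0}=1$, namely $\Sigma$; thus $\Sigma\cong\PG(4,q)$ and $\LL(1,1)$ is a hyperplane section of the Klein quadric. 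By \eqref{e0a} this section has $N=q^3+q^2+q+1$ points, and the only hyperplane section of $Q^+(5,q)$ with that cardinality is a non--degenerate parabolic quadric $Q(4,q)$ (a tangent section, being a cone over $Q^+(3,q)$, would have $1+q(q+1)^2$ points). Hence $\cW(2,2)=\cC(\Omega)$, where $\Omega$ is the point set of $Q(4,q)$ in $\PG(4,q)$.

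Next I would read off the three weights from the classification of the solid (hyperplane) sections of $Q(4,q)$. A nonzero codeword of $\cW(2,2)$ corresponds to a solid $\Pi\subset\PG(4,q)$, and its weight equals $N-\#(\Pi\cap Q(4,q))$. A solid $\Pi$ either is tangent to $Q(4,q)$, meeting it in a quadratic cone over a conic ($q^2+q+1$ points), or is non--tangent and meets it in a hyperbolic quadric $Q^+(3,q)$ ($(q+1)^2$ points) or in an elliptic quadric $Q^-(3,q)$ ($q^2+1$ points). The corresponding weights are $q^3$, $q^3-q$ and $q^3+q$; in particular there are exactly three nonzero weights, the smallest one being the minimum distance $q^3-q$ furnished by Theorem~\ref{tt}.

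Finally I would count the solids of each type, each solid accounting for $q-1$ codewords. The tangent solids are in bijection with the points of $Q(4,q)$, since $p$ is recovered from its tangent solid; hence there are $q^3+q^2+q+1$ of them, giving $(q-1)(q^3+q^2+q+1)=q^4-1$ words of weight $q^3$. To split the remaining $q^4$ solids between the hyperbolic and elliptic types I would double--count the flags $(\ell,\Pi)$ with $\ell$ a line of $Q(4,q)$ and $\Pi\supseteq\ell$ a hyperbolic solid. The quadric $Q(4,q)$, being a generalized quadrangle of order $(q,q)$, contains $(q+1)(q^2+1)$ lines; a fixed line $\ell$ lies in $q^2+q+1$ solids, of which exactly $q+1$ are tangent (the tangent solids at the $q+1$ points of $\ell$) and the other $q^2$ are non--tangent, hence of hyperbolic type because an elliptic quadric contains no line. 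Since each $Q^+(3,q)$ carries exactly $2(q+1)$ lines, the number $B$ of hyperbolic solids satisfies $2(q+1)B=(q+1)(q^2+1)q^2$, so $B=q^2(q^2+1)/2$ and the number of elliptic solids is $q^4-B=q^2(q^2-1)/2$; multiplying by $q-1$ produces the stated enumerator, and $q^2(q^2-1)/2+q^2(q^2+1)/2+(q^3+q^2+q+1)=(q^5-1)/(q-1)$ provides a consistency check. I expect the main technical point to be the uniformity in $q$ of the tangent--solid count: when $q$ is even the bilinear form attached to $Q(4,q)$ is degenerate, but one checks directly that its radical (the nucleus) lies off the quadric, so the tangent--solid map is still injective and the enumeration goes through unchanged.
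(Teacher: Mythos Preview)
Your proof is correct and follows the same geometric identification as the paper: both recognize $\LL(1,1)$ as a non--singular hyperplane section $Q(4,q)$ of the Klein quadric and read the weights off the three types of solid section. The difference lies in the enumeration. The paper counts hyperplanes via the polarity $\mu$ of $Q(4,q)$, matching tangent, hyperbolic and elliptic solids with points on the quadric, external points and internal points respectively, and quoting the classical counts $q^3+q^2+q+1$, $q^2(q^2+1)/2$, $q^2(q^2-1)/2$. You instead count tangent solids directly and then obtain the hyperbolic count by double--counting flags $(\ell,\Pi)$ with $\ell$ a line of $Q(4,q)$ and $\Pi$ a hyperbolic solid through it.

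Your route is slightly longer but buys uniformity in $q$: the internal/external dichotomy is really a $q$ odd notion, and for $q$ even the associated bilinear form is alternating with a one--dimensional radical (the nucleus), so the paper's polarity argument needs the extra care you supply. Your observation that the nucleus lies off $Q(4,q)$, so that $p\mapsto p^{\perp}$ remains injective on quadric points and every line through the nucleus is tangent, is exactly what is required to make the tangent--solid count go through; the flag count for hyperbolic solids then works without any parity assumption.
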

\begin{proof}
  The Lagrangian-Grassmmannian $\LL(1,1)$
  is a non--singular hyperplane section of the ordinary
  line--Grassmannian $\GG(3,1)$ of $\PG(3,q)$.
  In particular
  $\LL(1,1)=\GG(3,1)\cap\Sigma=Q(4,q)$,
  where $\Sigma$ is a suitable hyperplane of
  $\PG(4,q)$, depending only on $\sigma$.
  Thus, the code $\cW(2,2)$ is the same as the code determined by the
  projective system of $Q(4,q)$ in a $\PG(4,q)$. Let $\mu$ be the
  orthogonal polarity induced on $\Sigma$ by $Q(4,q)$.
  The three  weights of the code correspond to $3$--spaces which are
  polar (with respect to $\mu$) of points either on $Q(4,q)$ or internal
  or external to it.
  In particular, there are $q^3+q^2+q+1$ points on $Q(4,q)$ and their
  polar hyperplane meets $Q(4,q)$ in a cone consisting of $q^2+q+1$ points;
  the number of internal points is $q^2(q^2-1)/2$ and that of
  external points is $q^2(q^2+1)/2$. Observe that each hyperplane
  corresponds to $(q-1)$ words; this provides the complete enumerator.
\end{proof}

 \section{Lagrangian-Grassmannian codes of rank $3$}
\label{DPS}
In this section we shall provide the full weight enumerator for the
Lagrangian--Grassmannian code $\cW(3,3)$, and discuss some codes
arising from different embeddings of the Symplectic Grassmannian $\Lambda_{3,3}$.



\begin{theorem}
  For $k=n=3$, the minimum distance of the code $\cW(3,3)$ is $q^6-q^4$.
  The enumerator is as follows
  \[\begin{array}{l|l}
    \text{Weight} &\multicolumn{1}{c}{ \#\text{ Codewords} } \\ \hline
    q^6-q^4    & \frac{1}{2}q^2(q^2+1)(q^2+q+1)(q^3+1)(q-1) \\
    q^6        & (q+1)^2(q^2-q+1)(q^2+1)(q^6-q^3+1)(q-1) \\
    q^6+q^3    & q^9(q^4-1)(q-1)  \\
    q^6+q^4    & \frac{1}{2}q^2(q+1)(q^6-1)(q-1) \\
  \end{array} \]
  Furthermore, all codewords of minimum weight lie in the same orbit.
\end{theorem}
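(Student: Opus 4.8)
The plan is to reduce the computation to the combinatorics of alternating trilinear forms on $V=V(6,q)$ and to the orbit structure of $\Sp(6,q)$, then to carry out a (case‑by‑case) point count. As recalled in Section~\ref{PRE}, $\Sigma=\langle\LL(2,2)\rangle$ is a $14$--dimensional subspace of $\bigwedge^3V$, and its annihilator inside $\bigwedge^3V^*=(\bigwedge^3V)^*$ is the $6$--dimensional space $\sigma\wedge V^*$, which is exactly the set of alternating trilinear forms vanishing identically on every totally isotropic plane. Hence a hyperplane of $\PG(\Sigma)$ is the restriction of a non--zero trilinear form $f$ on $V$, well defined modulo $\sigma\wedge V^*$, and for a totally $\sigma$--isotropic plane $\pi$ the restriction $f|_\pi\in\bigwedge^3\pi^*$ is zero or not depending only on $f+\sigma\wedge V^*$. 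Therefore the codeword associated with $f$ has weight $N-z(f)$, where
\[ z(f):=\#\{\pi\in\Lambda_{3,3}\ :\ f|_\pi=0\}, \]
so $d_{\min}(\cW(3,3))=N-\max_{f\neq 0}z(f)$. The group $\mathrm{P\Gamma Sp}(6,q)$ acts compatibly on $\Lambda_{3,3}$, on $\bigwedge^3V^*/(\sigma\wedge V^*)$ and on the code, so $z$ is constant on orbits; in particular the last sentence of the theorem is equivalent to the assertion that $\max z$ is attained on a single orbit.

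First I would classify the relevant orbits. I would start from the known classification of non--zero alternating trilinear forms in dimension $6$ up to $\GL(6,q)$: the decomposable form (``rank'' $3$), the form $\alpha\wedge\omega$ with $\omega$ a rank--$4$ bivector (``rank'' $5$), and the ``rank $6$'' forms, which over $\F_q$ split into the split type $\alpha_1\wedge\alpha_2\wedge\alpha_3+\alpha_4\wedge\alpha_5\wedge\alpha_6$, its $\F_{q^2}$--twist, and the remaining degenerate type(s). For each $\GL(6,q)$--orbit one must (a) decide which cosets modulo $\sigma\wedge V^*$ it meets and (b) split it into $\Sp(6,q)$--orbits. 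The refining invariant for (b) is the position of the base locus of $f$ relative to $\sigma$: for a decomposable $f$ the $3$--space $U:=\bigcap\ker f$ is, once $f$ is not congruent into $\sigma\wedge V^*$, forced to be totally isotropic, hence a point of $\Lambda_{3,3}$; for a rank--$5$ form the kernel $\langle w_0\rangle$ is a point of $\cS=W(5,q)$ and the residual structure lives on $w_0^{\perp_{\sigma}}/\langle w_0\rangle$; for rank--$6$ forms the invariant is the $\Sp$--type of the pencil of bivectors $\{\iota_vf\}$, together with $\F_q$ square--class data. This produces the finite list of representatives.

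Then, for each representative $f$, I would compute $z(f)$ directly, partitioning $\Lambda_{3,3}$ according to the dimension and $\sigma$--type of $\pi\cap(\text{base locus of }f)$, reducing ``$f|_\pi=0$'' to a rank condition on a $3\times3$ matrix, and using the distance distribution $(1,\ q^3+q^2+q,\ q^5+q^4+q^3,\ q^6)$ of $\Lambda_{3,3}$ from a point together with standard counts of totally isotropic subspaces of $W(5,q)$. The decomposable case is immediate: here $f|_\pi=0\iff\pi\cap U\neq 0$, so $z(f)$ equals the size $1+(q^3+q^2+q)+(q^5+q^4+q^3)$ of a singular hyperplane of $\Lambda_{3,3}$, which gives weight $q^6$; these codewords account for $N$ of the $N(q^6-q^3+1)$ hyperplanes in that row, the remainder of which, together with the other non--zero weights, come from the rank--$5$ and rank--$6$ representatives. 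The size of each orbit is then $|\Sp(6,q)|/|\mathrm{Stab}(f)|$, the stabilizers having been obtained in the previous step; multiplying by $q-1$ gives the codeword multiplicities, which I would match against the four rows of the table. The smallest weight $q^6-q^4$ corresponds to the largest value of $z$, realized on one orbit by the classification, which yields the final sentence.

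The hard part will be the orbit classification over $\F_q$ (as opposed to over $\overline{\F_q}$): the ``generic'' rank--$6$ orbit splits into a split form and an $\F_{q^2}$--twist, each possibly splitting further under $\Sp(6,q)$ rather than $\GL(6,q)$, and the characteristic--$2$ behaviour of the splitting of $\bigwedge^3V^*$ into primitive forms and $\sigma\wedge V^*$, and of the type of a bivector, must be handled separately (for $q$ even one can instead exploit the isomorphism $\Lambda_{3,3}\cong DQ(6,q)$). Once the orbits and their stabilizers are in hand, the point counts are routine but error--prone, so I would cross--check the final table against $N$, against $K=14$ via $\sum(\text{mult})=q^{14}-1$, and against the orbit--stabilizer count for each row.
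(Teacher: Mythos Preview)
Your plan is sound in outline, but it is worth knowing that the paper does not carry out any of this: the proof there consists entirely of a reference to the classification of the geometric hyperplanes of $DW(5,q)$ arising from the Grassmann embedding, due to Cooperstein--De~Bruyn and De~Bruyn, from whose tables the four weights, their multiplicities, and the transitivity statement are read off directly. What you propose is, in effect, to reproduce that classification from scratch via the $\Sp(6,q)$--orbit structure on $\bigwedge^3V^{*}/(\sigma\wedge V^{*})$; this is essentially the viewpoint underlying those references, so the two approaches converge, but yours is a substantial independent project rather than a short deduction.

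Two comments on the details. First, your assertion that for a decomposable $f$ the kernel $U$ is ``forced to be totally isotropic'' once $f\not\equiv 0$ is not right as stated: no non--zero decomposable element of $\bigwedge^3V^{*}$ lies in $\sigma\wedge V^{*}$, so decomposable forms with non--isotropic $U$ also give non--trivial cosets, and you must either show that each such coset contains a decomposable representative with isotropic kernel, or else treat the non--isotropic case separately and identify which weight class it lands in. Second, you correctly anticipate that the weight $q^{6}$ comprises more than one $\Sp(6,q)$--orbit (only $N$ of those hyperplanes are singular), so the final orbit count is larger than four; this is consistent with the cited tables, but it means your stabilizer computations must be done for each geometric type, not for each weight. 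With these caveats the programme you outline would yield an independent proof; the paper simply imports the finished classification.
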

\begin{proof}
The theorem is a direct consequence of the classification of the classical
(geometric) hyperplanes of the dual polar space $\Lambda_{3,3}$ of rank $3$
arising from the Grassmann embedding, as provided in \cite{BC,B0}.
We refer, in particular, to \cite[Tables 1,2,3]{BC} for the exact
numbers of hyperplanes and the cardinalities of their intersection.
\end{proof}

\begin{remark}
  We point out that for $q=2^h$ even, the Lagrangian-Grassmannian $\Lambda_{n,n}$
  always affords the spin embedding in $\PG(2^n-1,q)$; see \cite{BCa}.
  In particular,  $\Lambda_{3,3}$ can also be embedded in
  $\PG(7,q)$.
  Such an embedding gives
  rise to a projective system with parameters $N=q^6+q^5+q^4+2q^3+q^2+q+1$
  and $K=8$. The corresponding code has just two weights, see \cite{B0}, namely
  \[\begin{array}[t]{l|l}
    \text{Weight} &\multicolumn{1}{c}{ \#\text{ Codewords} } \\ \hline
    q^6 & (q^2-1)(q^2+1)(q^3+1) \\
    q^6+q^3    & (q^7-q^3)(q-1).
  \end{array}\]
  We observe that for $q=2$, this determines a $[135,8,64]$ code, and
  the best known code with length $N=135$ and dimension $K=8$
  has minimum distance $d=65$ (see \cite{Grassl}).

  Likewise,  $\Lambda_{4,4}$ can be embedded in $\PG(15,q)$ and
  there it also determines
  a $2$--weight code of parameters $[N,K]=[(q^4+1)(q^3+1)(q^2+1)(q+1),16]$
  and weights $q^{10}$ and $q^{10}+q^7$; see \cite{IB}.
\end{remark}

\begin{remark}
For $q=2$, the universal embedding of $\Lambda_{3,3}$ is different
from the Grassmann embedding and it
spans a $\PG(14,2)$; see \cite{bl,li}.
As such it determines a code of length $N=135$, dimension $K=15$ with
weight enumerator as follows, see \cite{B0,P}:
  \[\begin{array}[t]{l|l}
    \text{Weight} &\multicolumn{1}{c}{ \#\text{ Codewords} } \\ \hline
    30 & 36 \\
    48 & 630 \\
    54 & 1120 \\
    62 & 3780 \\
    64 & 7695 \\
  \end{array}\qquad
\begin{array}[t]{l|l}
  \text{Weight} &\multicolumn{1}{c}{ \#\text{ Codewords} } \\ \hline
    70 & 10368 \\
    72 & 7680 \\
    78 & 1080 \\
    80 & 378 \\
  \end{array} \]
  In particular, the minimum distance in this case is $30$.
\end{remark}

\section{Further bounds on the minimum distance}
\label{bounds}
As $\LL(n-1,k-1)$ is a section of $\GG(2n-1,k-1)$ with a subspace
of codimension ${2n}\choose{k-2}$, it is possible to provide a bound
on the minimum distance of $\cW(n,k)$ in terms of higher weights of
the projective Grassmann
code induced by the projective system  $\GG(n-1,k-1)$.
Recall that the $r$--th higher weight of a code $\cC$ induced by
a projective system $\Omega$ consisting of $N$ points is
\[ d_{r}:=N-\max\{ \#(\Omega\cap\Pi) : \Pi \text{ projective subspace of
  codimension $r$ in } \langle\Omega\rangle  \}; \]
see \cite{Wei} for the definition and some properties, as well as~\cite{TV}
for its geometric interpretation;
in the case of Grassmann codes they have been extensively studied in \cite{Gl0,Go,No}.
As $\cW(n,k)$ can be regarded as the intersection of the Grassmannian
$\GG(2n-1,k-1)$ with
a suitable subspace $\Sigma$ of codimension ${{2n}\choose{k-2}}$, we have
\begin{multline*}
  d_{\min}(\cW(n,k))=\#\cW(n,k)-\max\{ \#(\GG(2n-1,k-1)\cap\Pi):
  \Pi\leq\Sigma, \dim(\Sigma/\Pi)=1 \}\geq \\
  \#\cW(n,k)-\max\{\#(\GG(2n-1,k-1)\cap\Pi): \codim_{\bigwedge^k\!V}(\Pi)={{2n}\choose{k-2}}+1\}= \\
  \#\cW(n,k)-\#\GG(2n-1,k-1)+d_{s}, \\
  \end{multline*}
where
$s={{2n}\choose{k-2}}+1$ and
 $d_s$ is the $s$-th higher weight of the Grassmann
code arising from $\cG_{2n,k}$. In general, this bound is not sharp.
This can be seen directly by considering the case of the code
$\cW(n,2)$.
Indeed,  using the
the second highest weight of the Grassmann code, see \cite{No}, we see that
  \begin{multline*}
    d_{\min}(\cW(n,2))\geq \frac{(q^{2n}-1)(q^{2n-2}-1)}{(q-1)(q^2-1)}-
    {2n\brack 2}_q+ q^{2(2n-2)-1}(q+1)= \\
    \frac{(q^{2n}-1)(q^{2n-2}-q^{2n-1})}{(q-1)(q^2-1)}+q^{2(2n-2)-1}(q+1)=
    \frac{q^{4n-2}-2q^{4n-3}+q^{4n-5}+q^{2n-1}-q^{2n-2}}{(q-1)(q^2-1)}\approx \\
      q^{4n-5}-2q^{4n-6}.
    \end{multline*}
This, however, is quite far away from the correct value
for line symplectic Grassmann codes, namely
$d_{\min}(\cW(n,2))=q^{4n-5}-q^{2n-3}$, as we have determined in
Section~\ref{LLG}.

We point out that in~\cite[Proposition 5]{PZ}, an upper  bound on the minimum distance for Lagrangian-Grassmannian codes is given in terms of the dimension of the Lagrangian-Grassmannian variety, that is \[d_{\min}(\cW(n,n))\leq q^{n(n+1)/2}.\]
By Section~\ref{DPS},  we see that this bound is not sharp for $n=2$ and $n=3$.

\end{document}